\theoremstyle{definition}  
\newtheorem{lemma}{Lemma}[section]
\newtheorem{theorem}{Theorem}[section]
\newtheorem{fact}{Fact}[section]
\newcommand{\hmcs}[1]{HMCS$\langle #1\rangle$}
\newcommand{\qnode}{\texttt{QNode}}
\newcommand{\qnodes}{\texttt{QNode}s}
\newcommand{\thr}{$t$}
\newcommand{\cas}{\texttt{CAS}}
\newcommand{\swap}{\texttt{SWAP}}
\newcommand{\NULL}{\texttt{null}}
\newcommand{\succe}{$s$}
\newcommand{\pred}{$p$}
\newcommand{\recycled}[1][]{\texttt{R}$_{#1}$}
\newcommand{\wait}[1][]{\texttt{W}$_{#1}$}
\newcommand{\abandoned}[1][]{\texttt{A}$_{#1}$}
\newcommand{\unlocked}[1][]{\texttt{U}$_{#1}$}
\newcommand{\cohort}[1][]{\texttt{C}$_{#1}$}
\newcommand{\parent}[1][]{\texttt{P}$_{#1}$}
\newcommand{\edge}[2]{edge $#1\rightarrow#2$}
\newcommand{\boxedHMCST}{\mbox{HMCS-T}}
\newcommand{\hmcst}[1]{\boxedHMCST{}$\langle #1\rangle$}
\def\@copyrightspace{\relax}
\begin{document}

\setcopyright{acmcopyright}

\doi{10.475/123_4}

\isbn{123-4567-24-567/08/06}

\conferenceinfo{PLDI '13}{June 16--19, 2013, Seattle, WA, USA}

\acmPrice{\$15.00}

%
\conferenceinfo{WOODSTOCK}{'97 El Paso, Texas USA}

\title{Correctness of Hierarchical MCS Locks with Timeout}
%
%
%
%
%
\numberofauthors{3} 
%
\author{
%
%
\alignauthor
Milind Chabbi\\
       \affaddr{Hewlett Packard Labs}\\
      \affaddr{Palo Alto, CA}\\
       \email{milind.chabbi@hpe.com}   
\alignauthor
Abdelhalim Amer\\
       \affaddr{Argonne National Laboratory}\\
      \affaddr{Lemont, IL}\\
       \email{aamer@anl.gov}   
\alignauthor
Shasha Wen, Xu Liu\\
      \affaddr{College of William and Mary}\\
      \affaddr{Williamsburg, VA}\\
      \email{\{swen, xl10\}@cs.wm.edu}
}


\maketitle

%
%
\begin{CCSXML}
<ccs2012>
 <concept>
  <concept_id>10010520.10010553.10010562</concept_id>
  <concept_desc>Computer systems organization~Embedded systems</concept_desc>
  <concept_significance>500</concept_significance>
 </concept>
 <concept>
  <concept_id>10010520.10010575.10010755</concept_id>
  <concept_desc>Computer systems organization~Redundancy</concept_desc>
  <concept_significance>300</concept_significance>
 </concept>
 <concept>
  <concept_id>10010520.10010553.10010554</concept_id>
  <concept_desc>Computer systems organization~Robotics</concept_desc>
  <concept_significance>100</concept_significance>
 </concept>
 <concept>
  <concept_id>10003033.10003083.10003095</concept_id>
  <concept_desc>Networks~Network reliability</concept_desc>
  <concept_significance>100</concept_significance>
 </concept>
</ccs2012>  
\end{CCSXML}

\ccsdesc[500]{Computer systems organization~Embedded systems}
\ccsdesc[300]{Computer systems organization~Redundancy}
\ccsdesc{Computer systems organization~Robotics}
\ccsdesc[100]{Networks~Network reliability}

%
%

%
%



\begin{figure*}
\centering
\begin{minipage}{\linewidth}
\includegraphics[width=\linewidth]{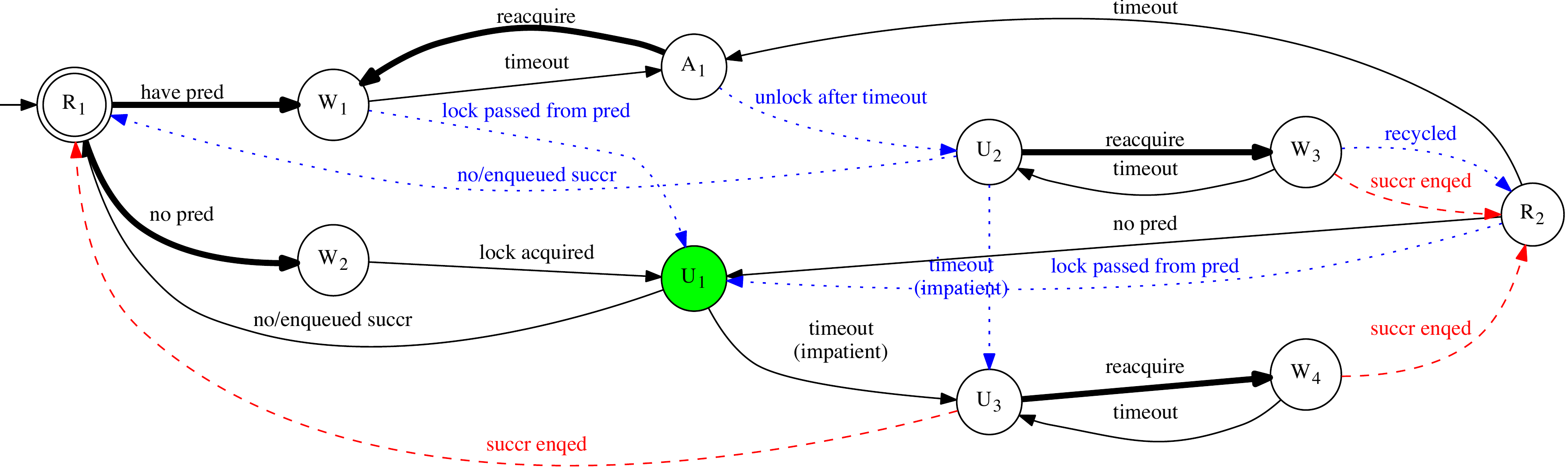}
\caption{{NFA for a \qnode{} status field in \hmcst{1}.}}
\label{fig:NFAStatusOneLevel}
\end{minipage}
\begin{minipage}{\linewidth}
\centering
\includegraphics[width=\linewidth]{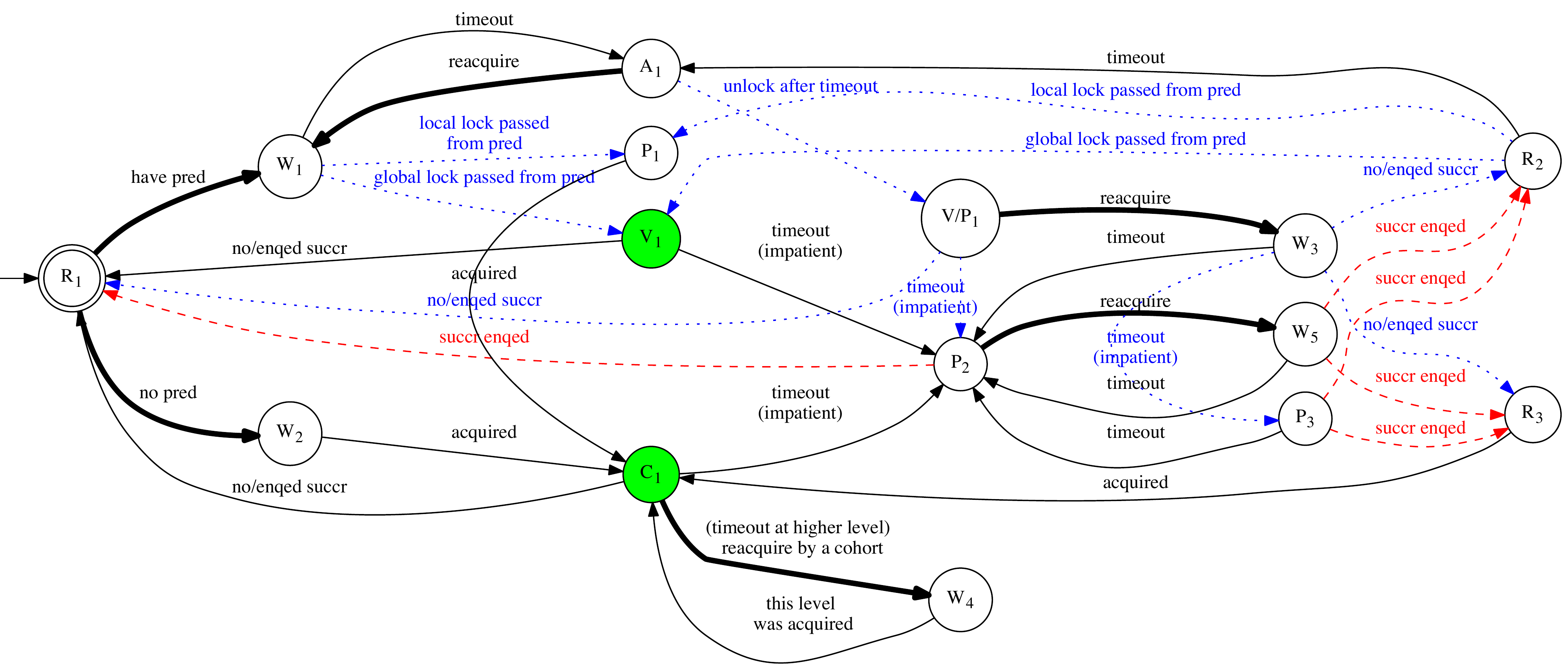}
\caption{{NFA for the status field of a non-root-level  \qnode{}.}}
\label{fig:NFAStatusNonRootLevel}
\end{minipage}
\begin{minipage}{\linewidth}
\centering
\begin{minipage}{.8\linewidth}
\includegraphics[width=\linewidth]{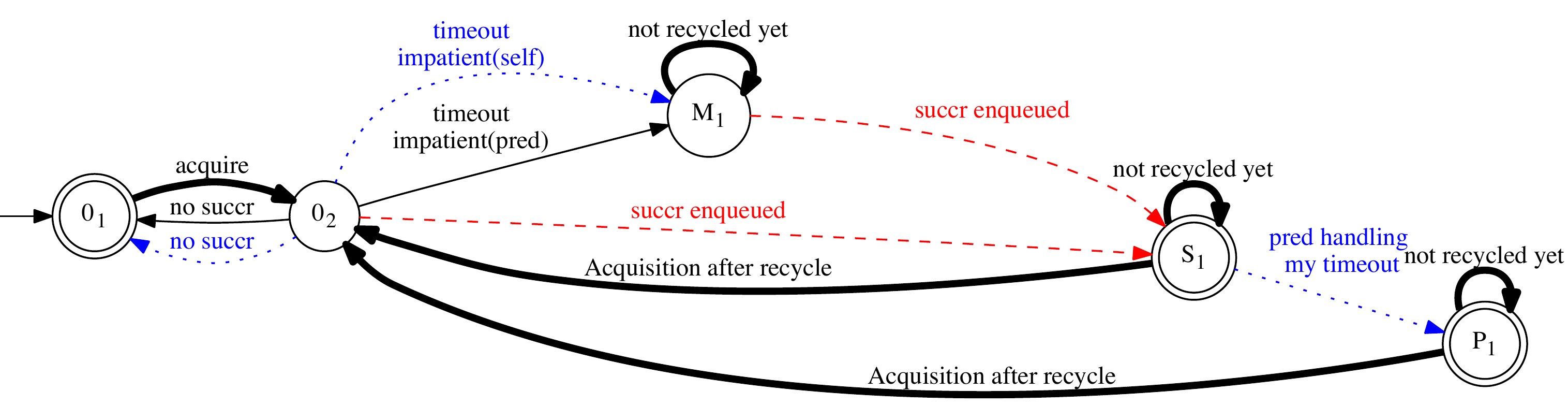}
\caption{{NFA for a \qnode{} \texttt{next} field in \hmcst{n}. There is no designated ''lock acquired'' node.}}
\label{fig:NFANextOneLevel}
\end{minipage}
\centering
\includegraphics[width=\linewidth]{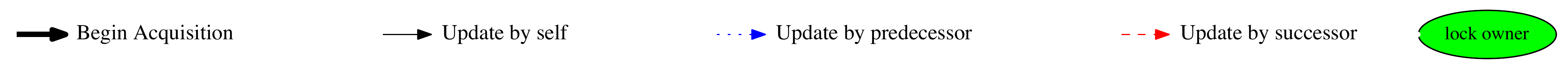}
\caption{{Legend for Figures~\ref{fig:NFAStatusOneLevel},~\ref{fig:NFAStatusNonRootLevel}, and~\ref{fig:NFANextOneLevel}}}
\end{minipage}
\end{figure*}

This manuscript serves as a correctness proof of the Hierarchical MCS locks with Timeout (HMCS-T) described in our paper~\cite{Chabbi:2017:Abort} titled ``An Efficient Abortable-locking Protocol for Multi-level NUMA Systems,'' appearing in the proceedings of the 22nd ACM SIGPLAN Symposium on Principles and Practice of Parallel Programming.

\boxedHMCST{} is a very involved protocol. The system is stateful; the values of prior acquisition efforts affect the subsequent acquisition efforts.
Also, the status of successors, predecessors, ancestors, and descendants affect steps followed by the protocol.
The ability to make the protocol fully non-blocking leads to modifications to the \texttt{next} field, which causes deviation from the original MCS lock protocol both in acquisition and release.
At several places, unconditional field updates are replaced with \swap{} or \cas{} operations.

We follow a multi-step approach to prove the correctness of \boxedHMCST{}. 
To demonstrate the correctness of \boxedHMCST{} lock, we make use of the Spin~\cite{Holzmann:1997:MCS:260897.260902} model checking. 
Model checking causes a combinatorial explosion even to simulate a handful of threads. 
First we understand the minimal, sufficient configurations necessary to prove safety properties of a single level of lock in the tree.
We construct \boxedHMCST{} locks that represent these configurations.
We model check these configurations, which proves the correctness of components of an \boxedHMCST{} lock.
Finally, building upon these facts, we argue logically for the correctness of \hmcst{n}.

\section{Minimal configuration}
We need to answer the following questions to design an \boxedHMCST{} lock configuration that is sufficient to exercise all possible thread interleaving in any arrangement: 

\begin{itemize}
\item How many threads are sufficient?
\item How many lock levels are sufficient?
\item How many lock acquisitions per participant are sufficient?
\end{itemize}

To answer these questions, we build  non-deterministic finite acceptors (NFAs) that capture the state transition for each shared variable.
The shared variables are the \texttt{status} and \texttt{next} fields of a \qnode{} and the \texttt{tail pointer} variable.
The transitions of the status flag of a  root-level \qnode{} are different from the transitions of the status field of a non-root-level \qnode{}.
Figure~\ref{fig:NFAStatusOneLevel} and Figure~\ref{fig:NFAStatusNonRootLevel}, respectively, show the  NFA for the \texttt{status} field of a root-level and a non-root-level \qnode{}.
Figure~\ref{fig:NFANextOneLevel} shows the NFA for the \texttt{next} field of any \qnode{}.
The \texttt{tail pointer}  variable can be either \NULL{} or non-\NULL{}, and it is less interesting in designing the \boxedHMCST{} verification configurations.
Appendix~\ref{appx:NFARoot},~\ref{appx:NFANonRoot}, and~\ref{appx:NFANext} describe the transition associated with every edge shown in Figure~\ref{fig:NFAStatusOneLevel},~\ref{fig:NFAStatusNonRootLevel}, and~\ref{fig:NFANextOneLevel}, respectively.

Node labels in Figure~\ref{fig:NFAStatusOneLevel}-\ref{fig:NFAStatusNonRootLevel} represent the field values in those states, and the subscripts distinguish the same values that bear different meanings in different contexts.
Solid black edges represent the actions taken by a thread \thr{} owning the \qnode{} under scrutiny.
Dotted blue edges represent the actions taken by a predecessor \pred{} of \thr{}.
Dotted red edges represent the actions taken by a successor \succe{} of \thr{}.
Thick black edges represent beginning of a new acquisition effort by a thread \thr{} that owns the \qnode{}.
Any subsequent path formed only of solid black edges represents a sequence of actions taken by a same thread of execution. 
Since the first operation in any acquisition is \swap{}ing the status field, every new acquisition edge has a \wait[i] node as its sink.
Green color filled node(s) represent the state(s) where the lock contending thread \thr{} has become the  owner of the lock at that level.

The NFA provides the following key insights:
\begin{enumerate}
\item \textbf{Three participants:} 
Any edge can be traversed via a path starting at the start state that involves no more than a predecessor (dotted blue edge), self (black edge), and a successor (dotted red edge) in Figures~\ref{fig:NFAStatusOneLevel}, ~\ref{fig:NFAStatusNonRootLevel}, and~\ref{fig:NFANextOneLevel}.
Hence, three participants (a predecessor, self, and a successor) are sufficient to exercise all possible transitions that the status field of a \qnode{} may go through. 

\item \textbf{Two rounds:}  
Any edge can be traversed via a path starting at the start state that involves no more than two ``begin acquisition'' (thick black line) edges.
Hence, two rounds of acquisitions on the same \qnode{} are sufficient to exercise all possible transitions. This means, at least, one thread should try two acquisitions.
The other two threads can perform one acquisition each to exercise all interleaving of the third thread that performs two acquisitions.

\item \textbf{Three levels:}  
The \edge{C_1}{W_4}  in Figure~\ref{fig:NFAStatusNonRootLevel} demands that a thread $t_1$ to have acquired the lock at the current node \texttt{q} at level $l$ and abandoned at an ancestor level and a different thread  $t_2$, a peer of $t_1$ at a level $<l$, to have inherited the level $l$ lock from $t_1$.
Hence, there should,  at least, be two threads at level $l-1$, which can cause one of them (say $t_1$) to acquire locks at level $l-1$ and $l$ but timeout at level $l+1$ and eventually grant the locks at level $l-1$ and $l$ to another thread (say $t_2$). 
Three levels, parent, current, and children are sufficient to exercise all possible transitions in a non-root-level \qnode{} .
\end{enumerate}

To elaborate on Property 1 and 2, we describe a few interesting transitions in  Figure~\ref{fig:NFAStatusOneLevel}.
The \edge{U_2}{U_3} needs \thr{} to have a predecessor to reach \unlocked[2] and then a successor to cause impatience  during the release protocol to transition to \unlocked[3].
The \edge{W_3}{R_2} needs \thr{} to have a predecessor to reach \unlocked[2] and then the second round of acquisition attempt by \thr{} to reach \wait[3] and then a successor to make \thr{} impatient in its release protocol to eventually make the successor update \thr{}'s status to \recycled[2].  
The \edge{R_2}{U_1} and \edge{R_2}{A_1}  need at least two rounds of acquisitions by \thr{} and a successor \succe{} to reach \recycled[2]. The same successor \succe{} can act as a predecessor for \edge{R_2}{U_1} transition. Similarly, \succe{} can act as a predecessor leading to a timeout to cause \edge{R_2}{A_1} transition.

\begin{figure}[t]
\centering
\begin{minipage}{.38\linewidth}
\includegraphics[width=\linewidth]{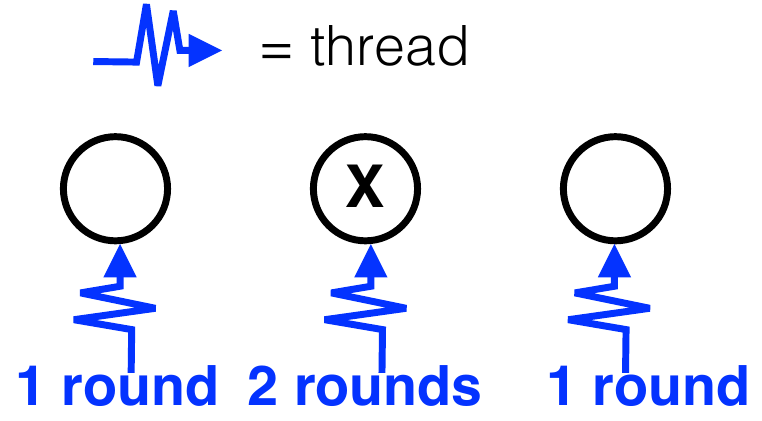}
\caption{{Model checking configuration to exercise all possible interleaving for a thread at root level.}}
\label{fig:SpinRoot}
\end{minipage}
\hspace{1ex}
\centering
\begin{minipage}{.58\linewidth}
\includegraphics[width=\linewidth]{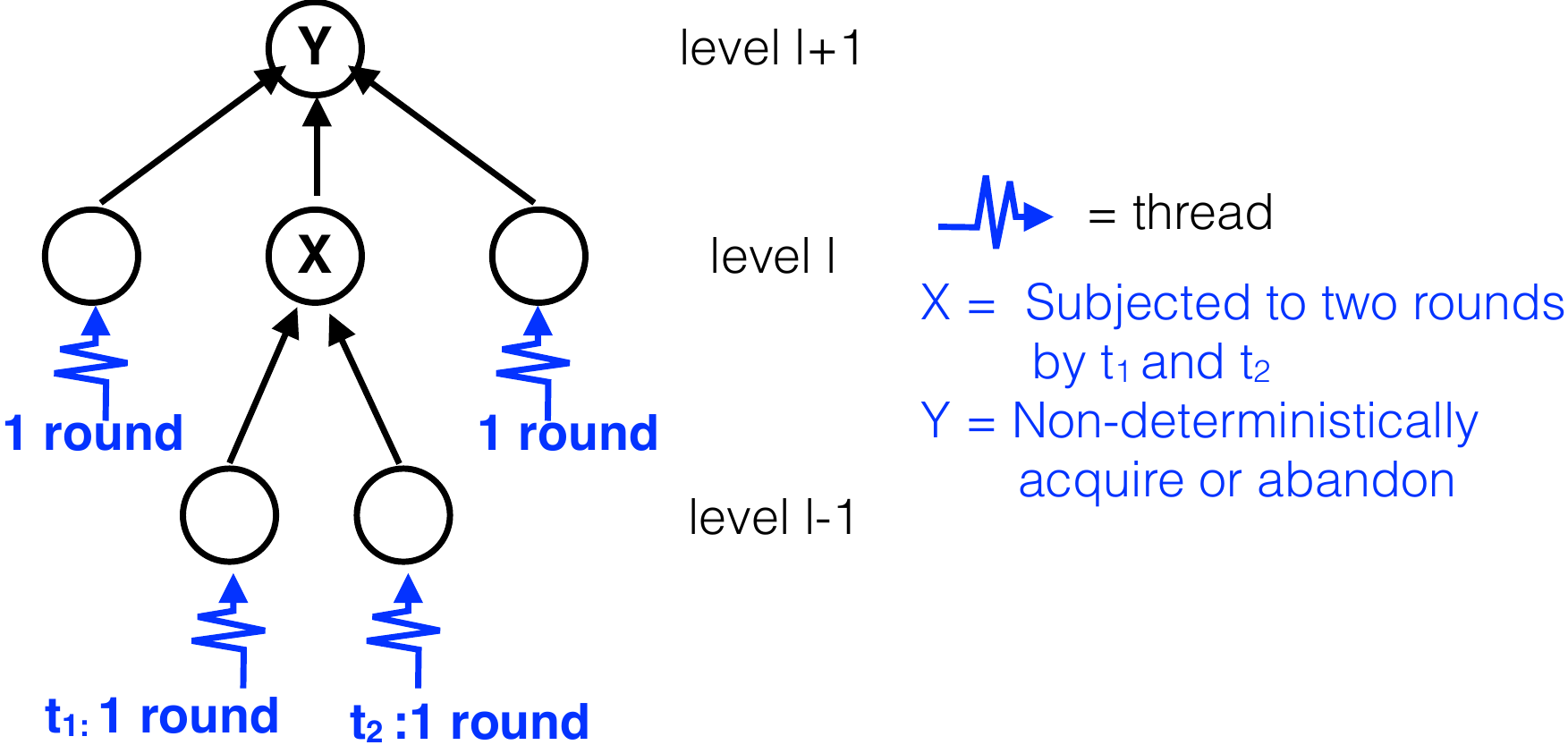}
\caption{{Model checking configuration to exercise all possible interleaving for a thread at a non-root level.}}
\label{fig:SpinNonRoot}
\end{minipage}
\end{figure}

NFAs, unfortunately, do not capture an important safety property---mutual exclusion.
An NFA is ill-defined if the ownership of a \qnode{} is not exclusive, which can happen if another thread belonging to the same domain starts modifying a shared \qnode{}. 
To check the mutual exclusion property, we exercise all possible thread interleaving in a model checking phase.

To exercise all states of the root-level lock we use a thread configuration shown in Figure~\ref{fig:SpinRoot}. 
The thread under scrutiny will be subjected to two rounds of acquisitions and the other two threads perform one round of acquisition each.
Since model checking will exercise all interleaving, the timeout value is immaterial.

To exercise all states of the non-root-level lock, we use a thread configuration shown in Figure~\ref{fig:SpinNonRoot}. 
There are two threads at level $1$, which can causes one of them (say $t_1$) to acquire the locks at level $1$ and $2$ but timeout at level $3$ and eventually grant the ownership of locks at level $1$ and $2$ to another thread (say $t_2$).
The presence of two threads at level $1$, also causes the common ancestor $X$, the \qnode{} under scrutiny at level $l$, to go through the necessary two rounds of acquisitions. 
The other two participants---a successor \succe{}, and a predecessor \pred{} at level $l$--- perform only one round of acquisition each.
The model checking does not require \succe{} and  \pred{}  to begin the protocol at the leaf level, which avoids exercising some non-interesting interleavings.
Hence, we set up \succe{} and \pred{} without children.
Note that such arrangement is for model checking only; the \boxedHMCST{} lock admits new acquisitions starting at the leaf level only.
In total, we need 4 threads, 2 at level $1$ sharing the parent  $X$, and 3 (of which one would have ascended from $1$) at level $2$.
The behavior at level $3$ will be non-deterministic---either a successful acquisition or  abandonment to simulate all possible transitions in $X$.
Non-deterministic behavior is easy to exhibit in Spin~\cite{Holzmann:1997:MCS:260897.260902}. 

The verification checks for the assertion that two threads are never simultaneously in the critical section for the configuration in Figure~\ref{fig:SpinRoot}.
This assertion ensures that the root-level lock ensures mutual exclusion to the critical section if each \qnode{} is accessed by  descendent threads in a mutually exclusive manner.
For the configuration in Figure~\ref{fig:SpinNonRoot}, we check that $t_1$ and $t_2$ never simultaneously acquire the level $l-1$ lock and no two threads ever simultaneously acquire the level $l$ lock.
This assertion ensures that a non-root-level lock ensures mutual exclusion to its next level if each \qnode{} is accessed by descendent threads in a mutually exclusive manner.

Additionally, the NFAs in Figure~\ref{fig:NFAStatusOneLevel},~\ref{fig:NFAStatusNonRootLevel}, and~\ref{fig:NFANextOneLevel} provide insights into the following key properties:
\begin{enumerate}
\item \textbf{Livelock Freedom:} There does not exist any cycle without at least one ``begin new acquisition'' edge. Hence, there cannot be perpetual state transitions (live lock) without user opting to start another round of lock acquisition.
\item \textbf{Starvation Freedom:} Every \wait[i] node (beginning of a new acquisition) has a path to the lock owning state (\unlocked[1] in Figure~\ref{fig:NFAStatusOneLevel} and \texttt{V}$_1$ and \cohort[1] in Figure~\ref{fig:NFAStatusNonRootLevel}), if it is not allowed to traverse any \emph{timeout} edge. This implies, every thread that starts its acquisition process and does not timeout, eventually acquires the lock.
The \texttt{next} field does not decide the lock ownership and hence ignored.
\item \textbf{Bounded Steps to Release:} There exists a finite-length solid-black edge path from lock owner state to another node $\eta$ such that a new acquisition (thick black edge) effort can begin at $\eta$.
 This implies, 1) an acquired lock can be released in a bounded number of steps by the lock owner and 2) once the lock is released, the \qnode{} can be subjected to another acquisition attempt immediately. 
\item \textbf{Bounded Steps on Timeout:} Every node that is \emph{not}  source node of a new acquisition edge (thick black edge) has a solid-black edge path to the source of a timeout edge.
This implies that in any state after starting an acquisition process if a timeout occurs, \thr{} can abandon the protocol in a bounded number of steps.
Source nodes of new acquisition edges are precluded because one cannot start an abandonment without having started an acquisition.
\item \textbf{Deadlock Freedom:} Every node has a path (there is an $\epsilon$ path to itself) formed out of solid-black edges to a node from where a new acquisition can begin.
\end{enumerate}

\section{Correctness of \hmcst{n}}

To establish the mutual exclusion guarantee of \hmcst{n}, we take the following steps:

\begin{lemma}\emph{\textbf{(Root level lock ensures mutual exclusion:)}}  A root-level lock ensures mutual exclusion if every root-level \qnode{} is owned by a descendent in a mutually exclusive manner.
\label{lem:root}
\end{lemma}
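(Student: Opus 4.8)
The plan is to reduce the general mutual-exclusion claim to a single, finite model-checking obligation over the minimal configuration of Figure~\ref{fig:SpinRoot}, and then to argue that this finite check is sound for the unbounded setting. First I would invoke the hypothesis directly: because every root-level \qnode{} is owned by a descendant in a mutually exclusive manner, each shared \qnode{} is modified by at most one thread of its domain at a time. This is precisely the precondition that makes the status-field NFA of Figure~\ref{fig:NFAStatusOneLevel} well-defined---no two threads race to rewrite the owner's status---so the reachable states of the root-level lock are exactly those accepted by that NFA, and the \emph{green} owner state \unlocked[1] marks critical-section entry.

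Next I would use the structural insights already extracted from the NFA to bound the configuration. The ``three participants'' insight guarantees that every edge, hence every reachable status-field state, is exercised by a path using only a predecessor \pred{}, self \thr{}, and a successor \succe{}; the ``two rounds'' insight guarantees that no path needs more than two ``begin acquisition'' edges on the same \qnode{}. Together these justify restricting attention to the three-thread configuration in which the thread under scrutiny performs two acquisitions while \pred{} and \succe{} perform one each. Since exhaustive model checking enumerates all interleavings, the timeout value is immaterial and may be chosen adversarially.

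With the configuration fixed, the proof is discharged by the Spin verification. I would state the safety assertion that the two critical-section occupancy flags are never simultaneously set, run the exhaustive reachability search over the relaxed \swap{} and \cas{} field updates, and conclude that no violating interleaving is reachable. The conditional mutual-exclusion guarantee for the root level then follows, with the conclusion inheriting exactly the per-\qnode{} ownership hypothesis.

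The hard part will be the soundness of the reduction rather than the model-checking itself: I must argue that any interleaving arising in an arbitrary \hmcst{n} execution at the root level projects onto an interleaving already realized in the three-thread, two-round instance. This rests on two facts---that the NFA transitions are closed under the identified participant and round bounds, and that additional threads or additional acquisition rounds can only revisit states and edges already covered by the three canonical participants. Care is also needed to confirm that the adversarial scheduler reproduces every \swap{} and \cas{} interleaving admitted by the non-blocking \texttt{next}-field modifications, so that the exhaustive search is not inadvertently narrowed and the finite check genuinely witnesses the infinite one.
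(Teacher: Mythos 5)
Your proposal takes essentially the same approach as the paper: the paper's entire proof of Lemma~\ref{lem:root} is to discharge the claim by Spin model checking of the configuration in Figure~\ref{fig:SpinRoot}, with the three-participant/two-round reduction justified beforehand by the NFA analysis, exactly as you lay out. Your closing concern about the soundness of the finite reduction is legitimate, but the paper treats that reduction the same way you do---via the informal NFA coverage argument---so your proof is faithful to, and if anything more explicit than, the paper's own.
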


\begin{proof}
Verified by model checking a root-level lock with the configuration shown in Figure~\ref{fig:SpinRoot}.
\end{proof}

\begin{lemma} \emph{\textbf{(Non-root level lock ensures mutual exclusion:)}} A non-root-level lock admits mutually exclusive access to the next level lock if every \qnode{} at that level is owned by a single descendent at a time.
\label{lem:nonroot}
\end{lemma}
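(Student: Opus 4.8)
The plan is to mirror the proof of Lemma~\ref{lem:root}: reduce the claim to an exhaustive model-checking run over a provably sufficient finite configuration—the one depicted in Figure~\ref{fig:SpinNonRoot}—and discharge the mutual-exclusion assertions in Spin. The only genuine difference from the root case is that a non-root \qnode{} can change ownership through lock inheritance, so the finite model must be rich enough to realize that behavior.

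First I would justify that the configuration of Figure~\ref{fig:SpinNonRoot} is minimal yet sufficient to realize every reachable state of a non-root \qnode{}. The NFA of Figure~\ref{fig:NFAStatusNonRootLevel} already certifies the three structural bounds I rely on: (i) every edge lies on a start-state path using at most one predecessor, one self, and one successor action, so three participants at the level under scrutiny suffice; (ii) every edge lies on a path crossing at most two ``begin acquisition'' edges, so two acquisition rounds on the observed \qnode{} suffice; and (iii) unlike the root case, the \edge{C_1}{W_4} transition forces inheritance of a level-$l$ lock from a peer, which can only arise when a thread acquires at levels $l-1$ and $l$, abandons at an ancestor, and hands its level-$l$ ownership to a peer—hence three live levels (child $l-1$, current $l$, parent $l+1$) are necessary and sufficient. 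Instantiating these bounds yields exactly the four threads of Figure~\ref{fig:SpinNonRoot}: two at level~$1$ sharing the common ancestor \qnode{}~$X$ at level~$2$, plus a successor \succe{} and predecessor \pred{} that I start directly at level~$2$ without children, since the leaf-level prefix of their protocol exercises no additional transition in~$X$.

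Next I would encode this configuration in Spin and assert the two safety conditions that together witness the lemma: that the two level-$1$ threads $t_1$ and $t_2$ are never simultaneously in possession of the level-$(l-1)$ lock, and that no two threads are ever simultaneously in possession of the level-$l$ lock. The first assertion encodes the hypothesis that each \qnode{} at the level under scrutiny is entered by a single descendant at a time; the second is the conclusion, namely that level~$l$ serializes the access it hands up to its next level. I would let the level-$3$ behavior be fully non-deterministic—either a successful acquisition or an abandonment—so that every way the ancestor can respond, in particular every path that drives $X$ through the inheritance transition, is explored, and I would leave the timeout value abstract because an exhaustive interleaving search subsumes all concrete timing.

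The main obstacle is the soundness of the finite reduction rather than the Spin run itself. I must argue that collapsing the descendant subtree into the single-owner assumption loses no behavior: any interleaving the full tree could induce on $X$ must be reproducible by the four-thread configuration, which rests on the completeness of the NFA reachability analysis—every reachable \texttt{status} and \texttt{next} value must appear as a node, and every enabling action as an incoming edge traversable from the start state. A secondary concern is state-space blow-up, since the non-root model carries three live levels together with \swap{} and \cas{} contention on the \texttt{next} field; I would exploit partial-order reduction and the symmetry between the interchangeable one-shot participants \succe{} and \pred{} to keep the search tractable, and confirm that the reported state space is exhausted without hash-table truncation so that the absence of assertion violations is conclusive.
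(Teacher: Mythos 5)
Your proposal takes essentially the same route as the paper: the paper likewise discharges this lemma by Spin model checking the four-thread, three-level configuration of Figure~\ref{fig:SpinNonRoot}, with exactly the two assertions you state ($t_1$ and $t_2$ never simultaneously hold the level $l-1$ lock, and no two threads ever simultaneously hold the level $l$ lock), and it justifies sufficiency of that configuration by the same NFA-derived bounds on participants, acquisition rounds, and levels, including the \edge{C_1}{W_4} inheritance transition. The practical concerns you add (partial-order reduction, symmetry, exhaustive search without hash truncation) go beyond what the paper records but do not change the argument.
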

\begin{proof}
Verified by model checking a non-root-level in an \boxedHMCST{} lock with the configuration shown in Figure~\ref{fig:SpinNonRoot}.
\end{proof}

\begin{fact}[Exclusive ownership of leaf-level node:] Every \qnode{} at leaf level is owned by a unique thread, and the ownership is never shared with any other thread.
\label{axiom:exclusiveleaf}
\end{fact}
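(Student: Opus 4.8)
The plan is to establish this fact directly from the structural design of the \boxedHMCST{} acquisition protocol rather than by model checking, since it concerns how a thread \emph{enters} the lock hierarchy rather than the state dynamics of any individual level. Fact~\ref{axiom:exclusiveleaf} is meant to serve as the base case on which Lemma~\ref{lem:root} and Lemma~\ref{lem:nonroot} rest: those lemmas preserve mutual exclusion upward under the hypothesis ``every \qnode{} is owned by a single descendent at a time,'' so the present argument only has to discharge that hypothesis at the bottom of the tree, where there are no further descendents.

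First I would recall the entry condition stated earlier: in \boxedHMCST{} every new acquisition begins at the leaf level. When a thread \thr{} initiates an acquisition it supplies a \qnode{} $q$ that is private to \thr{}---either thread-local storage or a fresh per-attempt allocation---and no reference to $q$ exists in any shared location until \thr{} has initialized $q$'s \texttt{status} and \texttt{next} fields and performed the \swap{} on the leaf-level tail pointer. Thus at the instant $q$ first becomes reachable by other threads, it is already unambiguously the contention node of \thr{}.

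Next I would argue that the only way a second thread can reach $q$ is through the queue linkage created by that \swap{}: a predecessor \pred{} learns of $q$ via its own \texttt{next} pointer, and \thr{} learns of its predecessor from the \swap{} return value. I would then enumerate the writes a non-owner performs on $q$ and confirm that each of them---the predecessor's and successor's \texttt{status} updates, corresponding to the dotted edges of Figure~\ref{fig:NFAStatusOneLevel}---is a signaling action and never a claim that installs a different thread as the contender $q$ represents. The crucial structural point distinguishing the leaf level from interior levels is that the leaf has no level below it, so the inheritance mechanism that transfers a \qnode{}'s ownership to a peer descendent (the \edge{C_1}{W_4} transition of Figure~\ref{fig:NFAStatusNonRootLevel}) has no counterpart here; consequently no protocol action can reassign $q$ to a second thread.

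The hard part will be making the distinction between ``ownership'' and mere ``access'' airtight: both a predecessor and a successor do write $q$'s \texttt{status} field, so I must show that none of these writes, under any interleaving, ever causes \thr{}'s leaf \qnode{} to be treated as another thread's contention node or to be entered by a foreign thread. I expect to do this by checking that every non-owner transition on a leaf \qnode{} in Figure~\ref{fig:NFAStatusOneLevel} leaves the thread-to-\qnode{} binding intact, after which the exclusive ownership and the permanence of that binding follow immediately, completing the base case for the inductive mutual-exclusion argument for \hmcst{n}.
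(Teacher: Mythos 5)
Your proposal is sound in substance, but you should know that the paper does not prove this statement at all: it is deliberately stated as a Fact---an axiom about data layout---because in \boxedHMCST{} every thread enters the protocol with its own private leaf-level \qnode{}, so exclusive ownership at the leaf holds by construction; the Fact is then simply invoked to close the induction in Theorem~\ref{thm:hmcsnExl} (two threads sharing a leaf \qnode{} must be one and the same thread). You instead discharge the axiom structurally: the \qnode{} is unreachable by others until the \swap{} on the tail pointer publishes it; every non-owner write (the dotted predecessor/successor edges of the NFAs) merely signals through \texttt{status} or \texttt{next} and never re-binds the \qnode{} to a different contender; and---the essential asymmetry---the inheritance mechanism that does transfer \qnode{} ownership between threads at interior levels (the \edge{C_1}{W_4} transition of Figure~\ref{fig:NFAStatusNonRootLevel}) has no leaf-level counterpart, since there is no level below from which a peer could inherit. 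The paper's choice buys economy: design-level invariants are axiomatized, and the model checker is reserved for the genuinely dynamic per-level behavior (Lemmas~\ref{lem:root} and~\ref{lem:nonroot}). Your route buys an explicit justification of the axiom, and it pinpoints exactly why the analogous claim fails at every non-leaf level---interior \qnodes{} are deliberately shared and inherited---which is what makes the leaf the only possible base for the induction. One caution: your discussion of signaling writes addresses access, not ownership; in the paper's sense ownership is the binding between a \qnode{} and the thread that enqueues it, so only the thread-privacy of the leaf \qnode{} together with the absence of inheritance actually carries the claim, and you should avoid implying that predecessor or successor writes could ever threaten it.
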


\begin{theorem} [\boxedHMCST{} ensures mutual exclusion:] \hmcst{n} ensures mutual exclusion to the critical section it protects.
\label{thm:hmcsnExl}
\end{theorem}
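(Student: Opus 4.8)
The plan is to prove the theorem by induction on the levels of the \hmcst{n} tree, propagating exclusive ownership upward from the leaves to the root and then invoking the root-level guarantee to conclude mutual exclusion to the critical section. The three facts established above are engineered to slot together precisely as base case, inductive step, and final step: Fact~\ref{axiom:exclusiveleaf} anchors the induction at the leaf level, Lemma~\ref{lem:nonroot} carries exclusivity from one non-root level to the next, and Lemma~\ref{lem:root} converts root-level exclusivity into the desired critical-section guarantee.

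Concretely, I would number the levels $1,\dots,n$ with the leaf level as $1$ and the root level as $n$, so that a thread begins contending at level $1$ and ascends toward level $n$, and only the holder of a root-level \qnode{} enters the critical section. The invariant carried through the induction is: \emph{at every level $l$, each \qnode{} is owned by at most one descendant thread at any instant}. For the base case, Fact~\ref{axiom:exclusiveleaf} states that every leaf-level \qnode{} is owned by a unique thread and is never shared, so the invariant holds at $l=1$. For the inductive step, assume the invariant at some non-root level $l<n$; this is exactly the hypothesis of Lemma~\ref{lem:nonroot} (``every \qnode{} at that level is owned by a single descendent at a time''), which then guarantees that the level-$l$ lock admits mutually exclusive access to the level-$(l+1)$ lock. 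Since the only threads contending at level $l+1$ are the per-node winners that ascended exclusively from level $l$, each level-$(l+1)$ \qnode{} is in turn owned by at most one descendant at a time, re-establishing the invariant at $l+1$. Iterating from $l=1$ to $l=n-1$ carries the invariant to the root, where Lemma~\ref{lem:root} applies and yields mutual exclusion to the critical section. The degenerate case $n=1$ is handled directly: the single level is simultaneously leaf and root, so Fact~\ref{axiom:exclusiveleaf} supplies the hypothesis of Lemma~\ref{lem:root} with no intervening induction.

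The main obstacle is justifying that this clean induction is sound even though each lemma was established only by model checking a small, fixed configuration (three or four threads, two acquisition rounds, at most three adjacent levels in Figures~\ref{fig:SpinRoot} and~\ref{fig:SpinNonRoot}). I must argue that those finite configurations faithfully abstract an arbitrary level embedded in an arbitrarily tall tree with arbitrarily many threads. The crucial leverage is the NFA analysis of Figures~\ref{fig:NFAStatusOneLevel},~\ref{fig:NFAStatusNonRootLevel}, and~\ref{fig:NFANextOneLevel}: because every reachable transition of a \qnode{}'s \texttt{status} and \texttt{next} fields can be exercised using no more than a predecessor, self, and a successor, within two acquisition rounds, and across at most three adjacent levels, the behavior at level $l$ depends on the rest of the tree solely through the single abstracted fact that the descendants below hold their nodes exclusively. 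This is precisely the interface that the model-checked lemma assumes, so the composition is legitimate; the care required is in verifying that no cross-level interaction escapes this three-level, three-participant, two-round window, which is exactly what the three insights of Section~\ref{} on minimal configurations are meant to rule out.
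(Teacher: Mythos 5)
Your proposal is correct, and it composes exactly the same three ingredients as the paper---Fact~\ref{axiom:exclusiveleaf} at the leaves, Lemma~\ref{lem:nonroot} for each intermediate level, and Lemma~\ref{lem:root} at the root---so in substance it is the paper's argument; what differs is the logical packaging. The paper argues by contradiction and descends: it assumes two threads $t_1$ and $t_2$ are simultaneously in the critical section, asks at which level their paths through the tree diverge (peers at level $n$, else peers at level $n-1$, and so on down), applies the appropriate lemma at that divergence level to rule each case out, and finally concludes that $t_1$ and $t_2$ would have to share a leaf \qnode{}, contradicting Fact~\ref{axiom:exclusiveleaf}. Your proof is the contrapositive of that descent: a bottom-up induction propagating the invariant ``each \qnode{} at level $l$ is owned by at most one descendant thread at a time'' from the leaves upward until Lemma~\ref{lem:root} can fire at the root. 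Your form has two small advantages: it discharges each lemma's hypothesis explicitly at the moment the lemma is invoked (the paper leaves this implicit in the sentence ``each level ensures mutual exclusion to the level above as long as the threads from descendent levels access the shared \qnode{} in a mutually exclusive manner''), and it handles the degenerate case $n=1$ cleanly, which the paper does not mention. The paper's form, in turn, localizes any hypothetical violation to a single divergence level, which reads naturally as a case analysis. One caution: your inductive step's claim that ``the only threads contending at level $l+1$ are the per-node winners that ascended exclusively from level $l$'' should be read as including threads that \emph{inherit} locks passed to them on a timeout (the \hmcst{n} protocol allows this); that scenario is exactly what the configuration of Figure~\ref{fig:SpinNonRoot} model checks, so Lemma~\ref{lem:nonroot} covers it, but the wording should not suggest that ascent is the only way to come to own a higher-level \qnode{}. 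Finally, your closing paragraph justifying why the small model-checked configurations faithfully abstract an arbitrary tree is a meta-argument the paper places in its discussion of minimal configurations rather than inside this proof; it is a welcome addition but not part of the theorem's argument proper.
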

\begin{proof}
  \hmcst{n} is composed of a root-level lock and $n-1$ non-root-level locks.  
Each level ensures mutual exclusion to the level above as long the threads from descendent levels (if any) accesses the shared \qnode{} at the current level in a mutually exclusive manner. 
Assume \hmcst{n} does not ensure mutual exclusion to the critical section. This means two threads $t_1$ and $t_2$ can  simultaneously be in the critical section.
Both $t_1$ and $t_2$ are either 1) peers at level $n$ and hence compete for the root-level lock at level $n$, or 2)  belong to the same domain and hence compete for a non-root-level lock at  a level $l < n$.

If $t_1$ and $t_2$  are peers at level $n$, they will enqueue, two different \qnodes{} and compete for the root-level lock and by Lemma~\ref{lem:root} only one of them can be in the critical section at a time.
Hence, $t_1$ and $t_2$ cannot be peers at the root-level.

Now, $t_1$ and $t_2$ are either peers at level $n-1$ or belong to the same domain at level $l' < n-1$.
If $t_1$ and $t_2$  are peers at level $n-1$, they will enqueue two different \qnodes{} and compete for the non-root-level lock at level $n-1$ and by Lemma~\ref{lem:nonroot} only one of them can own the level $n-1$ lock ensuring the mutual exclusion between them.
Hence, $t_1$ and $t_2$ cannot be peers at level $n-1$.

Since \hmcst{n} has only a finite number of levels, by extrapolation, $t_1$ and $t_2$ are either peers at the leaf level or  share the same \qnode{} at the leaf level.
If $t_1$ and $t_2$  are peers at the leaf level, they will enqueue two different \qnodes{} and compete for the non-root-level lock at the leaf level and by Fact~\ref{lem:nonroot} only one of them can own the leaf level lock ensuring the mutual exclusion between them.
Hence, $t_1$ and $t_2$ must be sharing the same \qnode{} at the leaf level. 
By Lemma~\ref{axiom:exclusiveleaf}, no two threads can share the same \qnode{} at the leaf level, hence $t_1 = t_2$, which contradicts the assumption.

Hence, only one thread can be in the critical section in \hmcst{n}.
\end{proof}

The desirable attributes---starvation freedom, live-lock and deadlock freedom, bounded steps to release or time out---for a given level of lock do not translate to the same for an entire \hmcst{n} lock.
To establish these properties for \hmcs{n}, we make the following claims:
\begin{fact}[Ordered acquisition:] Any thread in \boxedHMCST{} lock of $n$ levels obeys a monotonically increasing order in acquisition effort starting from level $1$ and ending at level $l \le n$.
\label{axiom:acquisition}
\end{fact}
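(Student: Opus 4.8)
The plan is to treat Fact~\ref{axiom:acquisition} as a structural invariant of the acquisition routine rather than a property requiring the NFA or model-checking machinery: the claim concerns only the order in which a single thread \thr{} exerts \emph{acquisition} effort, so I would prove it by a control-flow walk of the \boxedHMCST{} acquire protocol together with an induction on the level index. The first step is to fix a precise notion of ``acquisition effort'': the set of levels at which \thr{} actively enqueues a \qnode{} and contends for ownership, deliberately excluding (i) lock ownership \emph{inherited} as a grant from a peer and (ii) any work done during \emph{release} or abandonment, which may proceed top-down and is therefore irrelevant to this monotonicity claim.

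For the base case I would appeal to the single entry point of the protocol: a new acquisition is always initiated with \thr{}'s leaf \qnode{} at level $1$. This is exactly the restriction noted earlier in the text---``the \boxedHMCST{} lock admits new acquisitions starting at the leaf level only''---so every acquisition effort begins at level $1$ and at no higher level. For the inductive step I would show that whenever \thr{} exerts acquisition effort at some level $i < n$, the only code path that leads to further acquisition effort transfers control to the level-$(i+1)$ acquire via the nested parent-level call, and that this call is reached exactly when \thr{} has become the cohort head at level $i$. Since there is no branch that targets level $i+2$ directly, nor any branch that re-enters a level below $i$, the sequence of levels visited by acquisition effort is forced to be contiguous and strictly increasing.

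Finally I would enumerate the exit conditions of the per-level acquire to establish that the ascent halts at a well-defined top level $l \le n$ and that no effort occurs above it. There are three mutually exclusive terminating cases at level $l$: \thr{} acquires the root ($l = n$); \thr{} discovers at level $l$ that it is not the cohort head, so a peer already holds the higher-level locks and will hand them down, leaving \thr{} to wait rather than ascend; or a timeout fires and \thr{} begins abandonment. In each case control never reaches the level-$(l+1)$ acquire, so $l$ is the maximal level of acquisition effort and the visited set is precisely $\{1,2,\dots,l\}$.

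The step I expect to be the main obstacle is the careful separation of acquisition from the timeout and inheritance machinery. A thread that times out releases the locks it holds, and a thread chosen as successor inherits ownership at several levels at once; both phenomena superficially touch multiple levels and could appear to violate monotonicity. The real work is therefore a disciplined case analysis confirming that (a) release and inheritance are not ``acquisition effort'' under the definition above, and (b) the three halting cases are exhaustive, so that the only level-crossing transitions counted as acquisition are the upward nested calls, which are monotone by construction.
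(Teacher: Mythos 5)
The paper never actually proves Fact~\ref{axiom:acquisition}: it is stated in a \texttt{fact} environment with no proof at all --- an axiom-like assertion about the protocol's construction --- and is simply consumed later, together with Fact~\ref{axiom:release}, in the proof of the liveness theorem. Your proposal therefore does not mirror a paper argument; it supplies one where the paper has none. The route you take (fix a definition of ``acquisition effort,'' then a control-flow induction: acquisitions enter only at the leaf level, the only upward transfer is the nested call to level $i+1$ made by the cohort leader, and the ascent halts at some $l \le n$) is precisely the implicit code-structure justification the authors lean on when they label this a Fact rather than a Lemma. What the paper's choice buys is brevity and a clean division of labor --- the model-checked properties (Lemmas~\ref{lem:root} and~\ref{lem:nonroot}) are verified mechanically, while properties that follow from the shape of the code are asserted --- at the cost of leaving exactly the obligation you discharge to the reader.

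One refinement your case analysis needs: your second halting case (``\thr{} is not the cohort head, a peer will hand the locks down, so \thr{} waits rather than ascends'') is not always terminating in \boxedHMCST{}. As Appendix~\ref{appx:NFANonRoot} records, a predecessor may pass a waiting thread \emph{only a prefix} of the locks it holds (status \parent[1] rather than the full passing value \texttt{V}$_1$); the recipient then starts a new cohort (\edge{P_1}{C_1}) and must itself ascend to re-acquire the ancestral locks. This does not threaten the conclusion --- the resumed ascent is still strictly upward, so monotonicity survives, and indeed your inductive step (ascent happens exactly when \thr{} becomes cohort head) already covers it --- but the three cases you list are not exhaustive as claimed, and ``$l$ is the maximal level of acquisition effort'' must be read as defined by wherever the possibly-resumed ascent finally stops, not by the first level at which \thr{} waits.
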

\sloppy
\begin{fact}[Ordered release and abandonment:] \boxedHMCST{} lock of $n$ levels obeys a bitonically ordered release and abandonment---monotonically increasing in level followed by monotonically decreasing in level. 
A thread owning  locks  $1<=$ \texttt{prefix:suffix} $ \le n$ either releases the \texttt{suffix} locks before releasing the ownership of remaining \texttt{prefix} locks or delegates the same responsibility to  another thread that becomes the owner of entire \texttt{prefix:suffix} locks. 
\label{axiom:release}
\end{fact}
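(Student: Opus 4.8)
The plan is to split the bitonic claim into its two monotone halves and then glue them at the peak level. First I would invoke Fact~\ref{axiom:acquisition} (Ordered acquisition) to obtain the ``increasing'' half at no cost: a thread's acquisition effort enters levels $1,2,\dots$ strictly upward and ends at some peak level $l$, so at the instant it stops acquiring it owns a contiguous band of levels running from \texttt{prefix} up to \texttt{suffix}, whose top equals $l$ (with \texttt{prefix} equal to $1$ for a thread that acquired on its own, and \texttt{prefix} exceeding $1$ only when a band was installed on it by inheritance, which is handled in the delegation branch below). The remaining obligation is to show that the \emph{release} and \emph{abandonment} phases each traverse levels in the bitonic increasing-then-decreasing pattern, retiring the higher (\texttt{suffix}) locks strictly before the lower (\texttt{prefix}) ones, or else hand the whole band to a single successor; gluing the strictly increasing acquisition trajectory to such an unwind yields exactly the shape asserted, with $l$ as the turning point.

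For release I would read the release routine directly and induct on the number of levels the thread still holds. The routine is entered at the thread's own lowest held \qnode{} and, at a level $k$, faces the cohort dichotomy: either it finds a local successor within the passing threshold and hands it the level-$k$ \qnode{}, letting that successor \emph{inherit} every higher level the domain still holds---this is the ``delegates \dots entire \texttt{prefix:suffix}'' branch, which consolidates a contiguous band up to $l$ onto a single thread and terminates the releasing thread's unwinding without disturbing any lower level---or the cohort is exhausted and the routine releases the parent level $k+1$ \emph{before} it forwards or frees level $k$. The second case recurses strictly upward until it either delegates or reaches the root, then unwinds, retiring exactly one lock per level on the way down. Hence the release traversal is precisely increasing-then-decreasing: the upward recursion is the increasing phase, and the downward unwinding---in which each higher lock is retired before the next lower one---is the decreasing phase. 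Together with the first paragraph this settles the release half.

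The hard part, where I expect to spend most of the effort, is \emph{abandonment} on timeout, because a timing-out thread may hold several levels at once, must splice itself out of each level's queue while predecessors, peers, and successors act concurrently, and does so through \swap{}/\cas{} on the \texttt{next} field rather than plain stores. The plan is to show the abandon routine mirrors the release recursion---it walks up to the highest level it must disclaim and then relinquishes levels downward---and to lean on the already-established per-level facts to treat each per-level step as a single, order-respecting transition. Concretely, I would use the \emph{Bounded Steps on Timeout} and \emph{Deadlock Freedom} properties of the NFAs to guarantee each level's abandon completes in a bounded number of steps, and Lemma~\ref{lem:nonroot} together with the \edge{C_1}{W_4} transition of the non-root NFA to handle the delicate interleaving in which a peer $t_2$ \emph{inherits} the contiguous upper band from the timing-out thread $t_1$; there the inheritance again consolidates the whole higher band onto one owner, matching the delegation clause. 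The principal obstacle is exactly this concurrent inheritance during a multi-level abandon: I must show that when $t_1$ gives up a level, the \cas{}/\swap{} on \texttt{next} makes the handoff indivisible, so that no level is left half-abandoned and no lower level is touched before a higher one has been retired or inherited. Once each per-level abandon is shown to be atomic and order-respecting, their cross-level composition yields the same increasing-then-decreasing traversal as release. With acquisition increasing, and release and abandonment both bitonic with higher locks retired before lower ones, the three compose into the claimed bitonic order, completing the proof.
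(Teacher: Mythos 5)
The first thing to note: the paper does not prove this statement at all. It is stated as a Fact (its label even reads ``axiom''), alongside Fact~\ref{axiom:acquisition} and Fact~\ref{axiom:exclusiveleaf} --- an assertion about the structure of the release/abandon routines taken as given by construction of the protocol, and then used as a premise in the final liveness theorem. So your proposal cannot be compared to a paper proof; it attempts to discharge an obligation the paper deliberately leaves as an observation about the code. That ambition is legitimate (a fully rigorous treatment would want exactly such an argument), and your overall decomposition --- the increasing half from Fact~\ref{axiom:acquisition}, the decreasing half from the structure of the release recursion, delegation as the terminating branch --- does track how the protocol actually behaves.

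As a proof, however, your sketch has a genuine gap, concentrated exactly where you predicted: abandonment. The tools you invoke there cannot do the job. The NFA properties (Bounded Steps on Timeout, Deadlock Freedom) and Lemma~\ref{lem:nonroot} are per-level, per-\qnode{} statements: they constrain how a single \qnode{}'s \texttt{status} and \texttt{next} fields evolve, and they are completely agnostic about the order in which one thread visits \emph{different} levels. Cross-level ordering is precisely what this Fact asserts, and it can only come from the control flow of the release/abandon routines themselves (which level the code touches first, which call precedes which store) --- i.e., from the protocol code in the companion paper, which is not reproduced here and which your sketch never actually walks through for the timeout path. Likewise, ``the \cas{}/\swap{} on \texttt{next} makes the handoff indivisible'' is asserted, not shown; atomicity of one \cas{} does not by itself yield the multi-level invariant you need (that no lower level is disturbed before every higher level is retired or inherited). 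That requires an induction over the band of held levels, with the inheritance transition \edge{C_1}{W_4} handled inside the induction --- you name this but do not set it up. In short: your release half is essentially a correct reading of the code, but your abandonment half substitutes per-level properties that are too weak for the needed code-level induction, so the proposal as written does not establish the Fact --- which is, perhaps, why the paper declines to prove it and simply asserts it as a property of the construction.
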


\begin{theorem} \hmcst{n} guarantees live-lock freedom, deadlock freedom, starvation freedom, bounded steps to release, and bounded steps on timeout.
\end{theorem}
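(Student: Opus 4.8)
The plan is to lift the five per-level properties, which are already established for a single \qnode{} by the per-level NFA analysis of the preceding section (the enumerated \textbf{Livelock Freedom}, \textbf{Starvation Freedom}, \textbf{Bounded Steps to Release}, \textbf{Bounded Steps on Timeout}, and \textbf{Deadlock Freedom}), up to the full $n$-level lock. The composition is controlled entirely by the two structural facts: an acquisition effort visits levels in strictly increasing order $1,2,\dots,l$ (Fact~\ref{axiom:acquisition}), and release/abandonment is bitonic---monotonically up then monotonically down (Fact~\ref{axiom:release}). Because $n$ is finite, I would derive each global property by composing the corresponding per-level guarantee along these two orders, formally by induction on the level index, mirroring the level-peeling structure already used in the proof of Theorem~\ref{thm:hmcsnExl}.

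For the two boundedness properties I would proceed first, since they are the most direct. By \textbf{Bounded Steps to Release}, releasing the \qnode{} at any single level takes a bounded number of solid-black-edge steps; Fact~\ref{axiom:release} says a thread holding levels \texttt{prefix:suffix} discharges them in the descending phase of the bitonic order, touching each level at most once, so summing the per-level bound over the at most $n$ held levels yields a finite global bound. The timeout case is analogous using \textbf{Bounded Steps on Timeout}: after a timeout the thread enters the descending abandonment phase and exits each held level in a bounded number of steps, again over at most $n$ levels.

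For the liveness properties I would reason as follows. \textbf{Livelock Freedom}: each level's NFA contains no cycle free of a ``begin new acquisition'' edge, and since a thread traverses levels monotonically on acquire (Fact~\ref{axiom:acquisition}) and bitonically on release (Fact~\ref{axiom:release}), no level is revisited within one operation, so no cross-level cycle can close without a user-initiated acquisition and the composite transition system is acyclic modulo new acquisitions. \textbf{Deadlock Freedom}: within a level it holds by the NFA, while across levels the strict acquisition order of Fact~\ref{axiom:acquisition} imposes a total order on lock levels, so the global wait-for graph admits no cycle and no circular wait can arise. \textbf{Starvation Freedom}: by induction on the level, a thread that never times out traverses no timeout edge at any level, hence the per-level guarantee applies at each level; it acquires level $1$, and having acquired levels $1,\dots,k$ it is then guaranteed to acquire level $k+1$, so after finitely many levels it owns the full lock.

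The hard part will be the interaction with the cohort-style hand-off encoded in Fact~\ref{axiom:release}: a timing-out or releasing thread may \emph{delegate} the responsibility for its entire \texttt{prefix:suffix} to a successor that inherits ownership, rather than releasing directly. I must verify that this delegation does not create an unbounded chain of obligations charged to any single thread---each thread performs at most one bounded per-level exit action per level it ever held, and a hand-off discharges the delegator in bounded steps while the delegatee inherits a \emph{valid} owner state (the \cohort[1] and \texttt{V}$_1$ states of Figure~\ref{fig:NFAStatusNonRootLevel}). Establishing that the inherited state is consistent with the per-level NFA, so that the delegatee's subsequent release or timeout again enjoys the per-level bounds, is the crux that makes the inductive composition go through for both the boundedness and the starvation-freedom arguments; the fact that the per-level NFA already internalizes a bounded number of intra-cohort passes is what keeps this chain finite.
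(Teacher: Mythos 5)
Your overall composition strategy matches the paper's: lift the five per-level NFA guarantees to the $n$-level lock using Fact~\ref{axiom:acquisition} and Fact~\ref{axiom:release}, which together ensure every thread touches only a finite, ordered set of levels in any one operation. Your write-up is in places more detailed than the paper's proof (summing the per-level bounds, the wait-for-graph argument for deadlock freedom, the explicit induction for starvation freedom), and your concern about cohort-style delegation is legitimate, though the paper simply absorbs it into the statement of Fact~\ref{axiom:release}.

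However, your proposal omits the one logical hinge the paper's proof actually turns on, and this is a genuine gap: the per-level NFA guarantees are \emph{conditional}, not absolute. They hold only if each \qnode{} is accessed in a mutually exclusive manner by descendant threads; the paper is explicit that ``an NFA is ill-defined if the ownership of a \qnode{} is not exclusive.'' You treat the five enumerated properties as unconditionally established for a single \qnode{}, and you cite Theorem~\ref{thm:hmcsnExl} only as a structural template (``mirroring the level-peeling structure''), never as a premise. The paper's proof invokes Theorem~\ref{thm:hmcsnExl} precisely to discharge this precondition: mutual exclusion at every level is what guarantees each \qnode{} is owned by a single descendant at a time, hence that the NFAs are faithful models of the executions of the composed lock, and only then do the per-level livelock-, deadlock-, starvation-freedom and boundedness properties transfer. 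Without that step your induction lifts properties of automata that need not describe the system at all---two descendants concurrently mutating a shared \qnode{} would drive it outside the NFA's state space, vacating all five of your arguments at once. Invoking Theorem~\ref{thm:hmcsnExl} as a premise rather than an analogy closes the gap and makes your proof essentially the paper's.
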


\begin{proof}
\hmcst{n} is composed of a root-level lock and $n-1$ non-root-level locks.  
By Fact~\ref{axiom:acquisition} and~\ref{axiom:release}, every thread follows an ordered acquisition and release  or abandonment protocol.
Hence, each thread goes through a finite number of levels in any process.
At each level, root or non-root, the NFA that a thread is subjected to for its \qnode{}, ensures live-lock freedom, deadlock freedom, starvation freedom, bounded steps to release, and bounded steps on timeout if the \qnode{} is accessed mutually exclusively by descendants that share the same ancestor \qnode{}.
By Theorem~\ref{thm:hmcsnExl}, each \qnode{} is owned by a descendent thread in a mutually exclusive manner. 
Hence, by construction \hmcst{n} ensures live-lock freedom, deadlock freedom, starvation freedom, bounded steps to release, and bounded steps on timeout. 
\end{proof}

\appendix
\section{NFA for the status field of a root-level QNode}
\label{appx:NFARoot}
The status always starts in \recycled[1] state. 
\textbf{All other states are transient; a correctly implemented \hmcst{1} ought to revert the status of very \qnode{} to \recycled[1] eventually.}
On a fresh acquisition in the \recycled[1] state of a \qnode{} \texttt{q}, the initial \swap{} on \texttt{q.status} moves it non-deterministically to either \wait[1] (if there was a predecessor) or \wait[2] (no predecessor).

If no predecessor, the thread \thr{} updates \texttt{q.status} to \unlocked[1] (\edge{W_2}{U_1}).
In \unlocked[1], if \thr{} has a successor \succe{} that has already advertised itself with \texttt{q.next} or there is no successor, \thr{} releases the lock and updates \texttt{q.status} to \recycled[1] (\edge{U_1}{R_1}).
In \unlocked[2], if  \thr{} leaves due to timeout because a successor \succe{}  has not updated \texttt{q.next}, the NFA transitions into state \unlocked[3] (\edge{U_1}{U_3}).
In \unlocked[3], if \succe{} advertises itself and recycles \texttt{q.status}, the NFA transitions to \recycled[1] (\edge{U_3}{R_1}).
In \unlocked[3], if \thr{} attempts to re-acquire the lock, it will \swap{} \texttt{q.status} to \wait[4] (\edge{U_3}{W_4}).
If \thr{} times out in \wait[4] while waiting for it  to become \recycled{}, it reverts the state back to \unlocked[3]  (\edge{W_4}{U_3}).
In \wait[4], if \succe{} advertises itself and recycles \texttt{q.status}, the NFA transitions to \recycled[2] (\edge{W_4}{R_2}).

In \wait[1], a predecessor may pass the lock to the waiting thread \thr{} updating \texttt{q.status} to \unlocked[1] (\edge{W_1}{U_1}).
If \thr{} times out in \wait[1], it updates the state to \abandoned[1]  (\edge{W_1}{A_1}).
In  \abandoned[1], a predecessor \pred{} may move the status to \unlocked[2] (\edge{A_1}{U_2}).
In  \abandoned[1], any attempt by \thr{} to re-acquire the lock reverts the state to \wait[1] (\edge{A_1}{W_1}).
In \unlocked[2], if \pred{} manages to successfully release the lock, it will eventually transition \texttt{q.status} to \recycled[1] (\edge{U_2}{R_1}).
In \unlocked[2], if \pred{} times out (impatient) waiting for a successor delayed in updating  \texttt{q.next} field, the NFA transitions to \unlocked[3] (\edge{U_2}{U_3}).
In \unlocked[2], any attempt by \thr{}  to re-acquire the lock moves the state to \wait[3] (\edge{U_2}{W_3}).
If \thr{} times out in \wait[3], it reverts the state to \unlocked[2]  (\edge{W_3}{U_2}).
In \wait[3], either a predecessor may update the state to recycled \recycled[2], or an impatient predecessor may time out and a successor may update the state to recycled \recycled[2] (\edge{W_3}{R_2}).

In \recycled[2], \thr{} will reenqueue the \qnode{} and it may acquire the lock via transition to \unlocked[1] either because it has no predecessors or a predecessor passed the lock (\edge{R_2}{U_1}).
In \recycled[2], after enqueuing the node, if \thr{} times out waiting for the lock, it will transition to \abandoned[1] (\edge{R_2}{A_1}).
\section{NFA for the status field of a non-root-level QNode}
\label{appx:NFANonRoot}
We now describe the state diagram for the \texttt{status} field of a non-root-level \qnode{}.

The status always starts in \recycled[1] state. 
\textbf{All other states are transient, a correctly implemented non-root-level ought to revert the status of very \qnode{} to \recycled[1] eventually.}
On a fresh acquisition in the \recycled[1] state of a \qnode{} \texttt{q}, the initial \swap{} on \texttt{q.status} moves it non-deterministically to either \wait[1] (if there was a predecessor) or \wait[2] (no predecessor).

If no predecessor, the thread \thr{} updates \texttt{q.status} to \cohort[1] (\edge{W_2}{C_1}).
IN \cohort[1], if \thr{} has a successor \succe{} that has already advertised itself with \texttt{q.next} or there is no successor, \thr{} releases the lock and updates \texttt{q.status} to \recycled[1] (\edge{C_1}{R_1}).
In \cohort[1], if  \thr{} leaves due to timeout because a successor \succe{}  has not updated \texttt{q.next}, \thr{} leaves \texttt{q} by updating its status to \parent[2]  (\edge{C_1}{P_2}).
In \parent[2], if \succe{} advertises itself and recycles \texttt{q.status}, the NFA transitions to \recycled[1] (\edge{P_2}{R_1}).
In \parent[2], if \thr{} attempts to re-acquire the lock, it will \swap{} \texttt{q.status} to \wait[5] (\edge{P_2}{W_5}).
If \thr{} times out in \wait[5] while waiting for it  to become \recycled{}, it reverts the state back to \parent[2]  (\edge{W_5}{P_2}).
In \wait[5], if \succe{} advertises itself and recycles \texttt{q.status}, the NFA non-deterministically transitions to either \recycled[3] (\edge{W_5}{R_3}, if it finds no predecessor by the time \thr{} re-enqueues the node) or to \recycled[2] (\edge{W_5}{R_2}, if a predecessor is present by the time \thr{} re-enqueued the node).
In \recycled[3], \thr{} will acquire the lock immediately and update the status to \cohort[1] (\edge{R_3}{C_1}).

In \cohort[1], having acquired the current level (say $l$) lock \thr{} may ascend to an ancestor level and it may abandon the lock at that level.
In an effort to release the locks already held, \thr{} may pass its locks including $l$ lock to another thread, say \thr{}$_2$.
When \thr{}$_2$ begins its acquisition process at level $l$, it will \swap{} \texttt{q.status} to \wait[4] (\edge{C_1}{W_4}) and immediately realize that it inherited this lock and revert \texttt{q.status}  to \cohort[1] (\edge{W_4}{C_1})

If \thr{} times out in \wait[1], it updates the state to \abandoned[1]  (\edge{W_1}{A_1}).
In  \abandoned[1], a predecessor \pred{} may attempt to pass all locks it holds (\texttt{V}, a legal lock passing value) or only a prefix of locks (\parent{}) (\edge{A_1}{V/P_1}).
In  \abandoned[1], any attempt by \thr{} to re-acquire the lock reverts the state to \wait[1] (\edge{A_1}{W_1}).
In \texttt{V/P}$_1$, if \pred{} manages to successfully release the lock, it will eventually transition \texttt{q.status} to \recycled[1] (\edge{V/P_1}{R_1}).
In \texttt{V/P}$_1$, if \pred{} times out (impatient) waiting for a successor delayed in updating  \texttt{q.next} field, the NFA transitions to \parent[2] (\edge{V/P_1}{P_2}).

In \wait[1], a predecessor may pass the global lock (all locks on path to the root)  to \thr{} by  updating \texttt{q.status} to a legal passing value \texttt{V}$_1$  (\edge{W_1}{V_1}).
In \texttt{V}$_1$,  if \thr{} has a successor \succe{} that has already advertised itself with \texttt{q.next} or there is no successor, \thr{} releases the lock and updates \texttt{q.status} to \recycled[1] (\edge{V_1}{R_1}).
In \texttt{V}$_1$, if  \thr{} leaves due to timeout because a successor \succe{}  has not updated \texttt{q.next}, \thr{} would have already released all ancestral locks and then it leaves \texttt{q} by updating \texttt{q.status } to \parent[2]  (\edge{V_1}{P_2}).
In \wait[1],  a predecessor may pass only the local lock (having already released all its ancestral locks)  to \thr{} by  updating \texttt{q.status} to \parent[1]  (\edge{W_1}{P_1}).
IN \parent[1], when \thr{} notices that it owns the lock at that level, it will update the status to \cohort[1] to indicate the beginning of a new cohort (\edge{P_1}{C_1}).

In \texttt{V/P}$_1$,  \thr{} may attempt to re-acquire the lock, which transitions it to \wait[3] (\edge{V/P_1}{W_2}). In this state, \thr{} will have to wait till the node is recycled.
If \thr{} times out while waiting for the status to become \recycled[] in \wait[3], it will update the status to \parent[2] and leave (\edge{W_3}{P_2}).
In \wait[3], if the predecessor \pred{} trying to pass the lock becomes impatient because a successor \succe{}  has not updated \texttt{q.next}, \pred{} leaves \texttt{q} by updating its status to \parent[3]  (\edge{W_3}{P_3}).
If \thr{} times out while waiting for the status to become \recycled[] in \parent[3], it will update the status to \parent[2] and leave (\edge{P_3}{P_2}).
In \parent[3], if \succe{} advertises itself and recycles \texttt{q.status}, the NFA non-deterministically transitions to either \recycled[3] (\edge{P_3}{R_3}, if it finds no predecessor by the time \thr{} re-enqueues the node) or to \recycled[2] (\edge{P_3}{R_2}, if a predecessor is present by the time \thr{} re-enqueued the node).

In \wait[3], if the predecessor \pred{} manages to successfully release the lock to some other thread or relinquish the lock, \pred{} it will eventually transition \texttt{q.status} to \recycled[3] (\edge{W_3}{R_3}, if \thr{} finds no predecessor by the time it re-enqueues the node) or to \recycled[2] (\edge{W_3}{R_2}, if a predecessor is present by the time \thr{} re-enqueues the node).

In \recycled[2], \thr{} will reenqueue the \qnode{} and it may inherit the global lock (transition to \texttt{V}$_1$, \edge{R_2}{V_1}) or inherit only  lock prefix (transition to \parent[1] , \edge{R_2}{P_1}) from one of its predecessors.
In \recycled[2], \thr{} may timeout and abandon while waiting for the lock (\edge{R_2}{A_1}).

\section{NFA for the next field of a QNode}
\label{appx:NFANext}
We now describe the state diagram for the \texttt{next} field.
The \texttt{next} field starts with a \NULL{} value in state $0_1$.
At the beginning of an acquisition, thread \thr{} transitions to $0_2$, where the value of the \texttt{next} field remains unchanged from before (\edge{0_1}{0_2}).
If  \thr{} finishes relinquishing the lock, the state reverts to $0_1$ (\edge{0_2}{0_1}).
This transition can happen either by \thr{} itself (black solid edge) or after \thr{} has abandoned, which case a predecessor may act on \thr{}'s behalf (blue colored dotted edge).

If a successor enqueues and advertises itself with a legal \qnode{} pointer value \texttt{S}, NFA transitions to $S_1$  (\edge{0_2}{S_1}).
\thr{} may successfully acquire the lock and release, which leaves it in $S_1$.
\thr{} may timeout  and abandon, which leaves it in $S_1$ and subsequent attempts to acquire by \thr{} will leave it in $S_1$ until a predecessor marks the \qnode{} for recycling at which point \thr{} resets the next pointer to \NULL{} just before enqueuing (\edge{S_1}{0_2}).
In $S_1$, if \thr{} times out, a predecessor, may reuse the \texttt{next} field to remember the predecessor on its forward journey to find a waiting successor (\edge{S_1}{P_1}).
In $S_1$, if \thr{} attempts to re-acquire, it will wait and possibly timeout (\edge{S_1}{S_1}). 
In $P_1$, once a predecessor has recycled the \qnode{}, \thr{} will reset the next pointer to \NULL{} and re-enqueue (\edge{P_1}{0_2}). 
In $P_1$, if \thr{} attempts to re-acquire, it will wait and possibly timeout (\edge{P_1}{P_1}). 
In $0_2$, if \thr{} timeouts during release waiting for the successor to update the \texttt{next} pointer, \thr{} writes $M_1$ (\edge{0_2}{M_1}). 
If \thr{} times out during acquire in $0_2$, a predecessor may trigger the \edge{0_2}{M_1} transition. 
In $M_1$, if \thr{} attempts to re-acquire, it will wait and possibly timeout (\edge{M_1}{M_1}) until the node is recycled by the successor (\edge{M_1}{S_1}).

\bibliographystyle{abbrv}
\balance
\bibliography{bibliography}  

%
%
\end{document}